\theoremstyle{plain}
\newtheorem{theorem}{Theorem}[section] 
\newtheorem{lemma}[theorem]{Lemma}
\newtheorem{prop}[theorem]{Proposition}
\theoremstyle{definition}
\numberwithin{equation}{section}
\begin{document}

\title[Inequalities between size, mass, angular momentum, and charge] {Inequalities between size,
mass, angular momentum, and charge for axisymmetric bodies and the formation of trapped surfaces}

\author[Khuri]{Marcus Khuri} \address{Department of Mathematics\\ Stony Brook University\\ Stony
Brook, NY 11794, USA} \email{khuri@math.sunysb.edu}

\author[Xie]{Naqing Xie} \address{School of Mathematical Sciences\\ Fudan University\\ Shanghai
200433, PR China} \email{nqxie@fudan.edu.cn}

\dedicatory{Dedicated to the memory of our late friend and colleague Sergio Dain.}

\thanks{M. Khuri acknowledges the support of NSF Grant DMS-1308753. N. Xie is partially supported
by the National Science Foundation of China Grants No. 11421061, No. 11671089.}

\begin{abstract}
We establish inequalities relating the size of a material body to its mass,
angular momentum, and charge, within the context of axisymmetric initial data sets for the
Einstein equations. These inequalities hold in general without the assumption of the maximal
condition, and use a notion of size which is easily computable. Moreover, these results give rise to black hole existence criteria which are meaningful even in the time-symmetric case, and also include certain boundary effects.
\end{abstract} \maketitle

\section{Introduction}
\label{sec1} \setcounter{equation}{0} \setcounter{section}{1}

Inequalities relating the size of black holes\footnote{This refers to apparent horizons.} to the angular momentum and charge that they contain have been studied extensively and optimal results have been obtained, see \cite{ClementJaramilloReiris} and the references therein. More recently Dain \cite{Dain0.1,Dain} proposed extending these type of inequalities to arbitrary material bodies, and progress has been made in this direction \cite{AngladaDainOrtiz,Dain1,Khuri1,Khuri2,Reiris}. There are, however, undesirable features associated with each of these works. For instance, most are constrained to apply in the time-symmetric or maximal setting, whereas those which do not have this hypothesis involve complicated notions of size such as the Schoen/Yau radius \cite{SchoenYau2}. A primary motivation for the present article is to remove the time-symmetric and maximal assumptions, as well as to utilize simple notions of size such as lengths, areas, volumes etc.

These problems are naturally related to the Trapped Surface and Hoop Conjectures \cite{Seifert, Thorne}, which seek a mathematical formulation of the folklore belief that if enough matter and/or gravitational energy are present in a sufficiently small region, then the system must collapse to a black hole.  In \cite{Khuri1,Khuri2} this paradigm has been generalized to show that black hole formation may arise from concentration of angular momentum or charge, although the measurement of size in these results relies on the Schoen/Yau radius. In this paper we will present new results concerning the existence of trapped surfaces due to concentration of matter, angular momentum, and charge which rely on elementary measurements of size. Although the Trapped Surface and Hoop Conjectures have received much attention, they are far from being completely resolved. For instance, most works in this area impose strong hypotheses such as spherical symmetry, maximal slicing, or conformally flat geometry \cite{BeigOMurchadha,BizonMalecOMurchadha1,BizonMalecOMurchadha2,
Flanagan,Khuri0,Malec1,Malec2,MalecXie,OMurchadhaTungXieMalec,Wald1}. On the other hand, Schoen and Yau \cite{Eardley,SchoenYau2,Yau} were able to remove these restrictions, but paid for this generality with an intricate measurement of size and/or the fact that their results are not meaningful for low extrinsic curvature. Here we will mediate between these two sides by assuming the milder restriction of axisymmetry and obtaining results that are relevant even in the time-symmetric case.


\section{Setting and Statement of Results}
\label{sec2} \setcounter{equation}{0}
\setcounter{section}{2}

Let $(M, g, k)$ be an initial data set for the Einstein equations. This comprises a 3-dimensional manifold $M$, a Riemannian metric $g$, and a symmetric 2-tensor $k$ denoting the second fundamental form of an embedding into spacetime. If $\mu$ and $J$ are the energy and momentum densities of the matter fields, respectively, then the initial data satisfy the constraint equations
\begin{align}\label{0}
\begin{split}
16\pi \mu &= R+(Tr_{g} k)^{2}-|k|^{2},\\
8\pi J^{i} &= \nabla_{j}(k^{ij}-(Tr_{g} k)g^{ij}).
\end{split}
\end{align}
When charge plays a significant role, it is useful to extract the contributions of the electromagnetic field $(E,B)$ and consider the energy and momentum densities of the remaining fields
\begin{align}\label{1}
\begin{split}
\mu_{EM} &= \mu-\frac{1}{8\pi}\left(|E|^{2}+|B|^{2}\right),\\
J_{EM} &= J +\frac{1}{4\pi}E\times B,
\end{split}
\end{align}
where $(E\times B)_{i}=\epsilon_{ijl}E^{j}B^{l}$ is the cross product and $\epsilon$ is the volume form of $g$.

The initial data are said to be axially symmetric if the isometry group of $(M,g)$ contains a subgroup isomorphic to $U(1)$, and all quantities defining the initial data remain invariant under the $U(1)$ action. The associated axisymmetric Killing field will be denoted by $\eta$, and we have
\begin{equation}\label{2}
\mathfrak{L}_{\eta}g=\mathfrak{L}_{\eta}k=0,\ \mathfrak{L}_{\eta}E=\mathfrak{L}_{\eta}B=\mathfrak{L}_{\eta}J=0,\ \mathfrak{L}_{\eta}\mu=0,
\end{equation}
where $\mathfrak{L}_{\eta}$ is Lie differentiation. One advantage in the axisymmetric setting is the existence \cite{Chrusciel} of a global (cylindrical) Brill coordinate system $(\rho,z,\phi)$ in which the metric takes the simple form
\begin{equation}\label{3}
g=e^{-2U+2\alpha}(d\rho^{2}+dz^{2})+\rho^{2}e^{-2U}(d\phi+A_{\rho}d\rho+A_{z}dz)^{2}.
\end{equation}
Here the Killing vector is given by $\eta=\partial_{\phi}$, and thus all of the functions $U$, $\alpha$, $A_{\rho}$, and $A_{z}$ are independent of $\phi$. The proof of existence of Brill coordinates was carried out in \cite{Chrusciel} for simply connected, asymptotically flat initial data. In fact a byproduct of the proof asserts that such manifolds must be topologically trivial if they only have one end, that is $M\cong\mathbb{R}^3$. As the results of the current paper will be concerned solely with compact subsets of a simply connected $M$, we may obtain Brill coordinates by extending the initial data to the asymptotically flat regime. In particular we may assume without loss of generality that the quotient $\Sigma$ is diffeomorphic to the half plane.

In order to better understand the structure of the metric \eqref{3}, let us recall basic facts about the quotient manifold $\Sigma=M/U(1)$; this is the collection of all orbits of the $U(1)$ action and comes equipped with a canonical projection $\Pi:M\rightarrow M/U(1)$. The quotient metric is defined in the following way. Given $\hat{X},\hat{Y}\in T_{\hat{p}}\Sigma$, let $X,Y\in T_{p}M$ (with $\hat{p}=\Pi(p)$) be the unique vectors which are perpendicular to $\eta$
and satisfy $d\Pi_{p}(X)=\hat{X}$, $d\Pi_{p}(Y)=\hat{Y}$. The quotient metric is then defined by
\begin{equation}\label{4}
\hat{g}(\hat{X},\hat{Y})=g(X,Y)-\frac{g(X,\eta)g(Y,\eta)}{|\eta|^{2}}.
\end{equation}
In terms of the Brill coordinate expression,
the quotient metric is the first part of \eqref{3} in isothermal coordinates. The second part of \eqref{3} represents the square of the dual 1-form to $|\eta|^{-1}\eta$.
Note also that the Levi-Civita connection on $\Sigma$ is defined by
\begin{equation}\label{5}
\hat{\nabla}_{\hat{X}}\hat{Y}=d\Pi_{p}\left[(\nabla_{X}Y)^{\perp}\right],
\end{equation}
where $\perp$ is used to indicate the part perpendicular to $\eta$ and $\nabla$ is the Levi-Civita connection on $M$.


A body $\Omega$ is a connected open subset of $M$ with compact closure and smooth boundary $\partial\Omega$, and is referred to as axisymmetric if the $U(1)$ symmetry of $M$ acts by isometries on $\Omega$.  Axisymmetry allows for a suitable and well-defined notion of angular momentum for bodies (not necessarily axisymmetric), namely
\begin{equation}\label{6}
\mathcal{J}(\Omega)=\int_{\Omega}J_{i}\eta^{i} d\omega_{g}.
\end{equation}
In this setting gravitational waves do not carry angular momentum, so that all angular momentum arises from the matter fields. Without this assumption quasi-local angular momentum is a challenge to define \cite{Szabados}. Let us also record the square of the total charge
contained in the body
\begin{equation}\label{7}
Q^{2}=\left(\frac{1}{4\pi}\int_{\Omega}\operatorname{div} E d\omega_{g}\right)^{2}+\left(\frac{1}{4\pi}\int_{\Omega}\operatorname{div} B d\omega_{g}\right)^{2}.
\end{equation}

Recall that the gravitational field's strength near a
2-surface $S\subset M$ can be assessed by the quantities
\begin{equation}\label{8}
\theta_{\pm}:=H_{S}\pm Tr_{S}k,
\end{equation}
which are referred to as null expansions, where $H_{S}$ denotes the mean curvature in the
outward direction. The null expansions describe the rate at which the area of a shell of light is changing, when emitted outwardly by the surface in the future direction
($\theta_{+}$), and past ($\theta_{-}$).  Hence
the gravitational field is strong near
$S$ if $\theta_{+}< 0$ or $\theta_{-}< 0$, in which case
$S$ is called a future (past) trapped surface. We will say that a surface $S$ is (strongly) untrapped if the following (strict) inequality holds $H_{S}-|Tr_{S}k|\geq 0$.
Future (past) apparent horizons satisfy $\theta_{+}=0$
($\theta_{-}=0$), and arise from the boundaries of future
(past) trapped regions. Apparent horizons are interpreted as quasi-local versions of event horizons, and assuming Cosmic Censorship, they must generically be located inside
black holes \cite{Wald}.

In order to state the main results, we describe here certain energy conditions and lengths that will be used. Let $(e_{1},e_{2},e_{3}=|\eta|^{-1}\eta)$ be an orthonormal frame field on $M$ and $\vec{J}=J(e_{1})e_{1}+J(e_{2})e_{2}$, then
\begin{equation}\label{9}
\mu\geq|\vec{J}|+|J(e_{3})|
\end{equation}
is a slightly stronger version of the standard dominant energy condition
\begin{equation}\label{10}
\mu\geq|J|=\sqrt{|\vec{J}|^{2}+J(e_{3})^{2}}.
\end{equation}
When charge is present, the following the so called charged dominant energy condition will be useful
\begin{equation}\label{11}
\mu_{EM}\geq |J_{EM}|.
\end{equation}
Furthermore, we set $l=2\pi\min_{\Omega}|\eta|$ and $L=2\pi\max_{\Omega}|\eta|$ to be the minimum and maximum lengths of axisymmetric orbits in $\Omega$, and will always assume that $l>0$ so that $\Omega$ stays away from the axis. Lastly $\chi(\Pi(\Omega))$ will denote the Euler characteristic of the projection of $\Omega$ in the orbit space $\Sigma$.

Our first result yields a lower bound for size in terms of the energy-momentum content of the matter fields, and shows that if too much energy-momentum is contained in a domain of a fixed size then collapse must ensue. Notice also that a contribution is played by the null expansions of the boundary, similar to the inequalities of Yau in \cite{Yau}.

\begin{theorem}\label{thm1}
Let $(M,g,k)$ be an axially symmetric, simply connected initial data set.
Let $\Omega\subset M$ be an axisymmetric body situated away from the axis $(l>0)$, satisfying the dominant energy condition $\mu\geq|J|$, and with strongly untrapped boundary.\smallskip

$i)$ If $\Omega$ is void of apparent horizons then
\begin{equation}\label{12}
\int_{\Omega}(\mu-|J|)d\omega_{g}+\frac{L}{8\pi l}\int_{\partial\Omega}
(H-|Tr_{\partial\Omega}k|)d\sigma_{g}\leq\frac{L^2}{4l}\chi(\Pi(\Omega)).
\end{equation}

$ii)$ If the opposite (strict) inequality in \eqref{12} holds then $\Omega$ must contain an apparent horizon.
\end{theorem}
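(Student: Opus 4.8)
The plan is to descend to the orbit space $\Sigma=M/U(1)$ and convert the constraint equations \eqref{0} into a two-dimensional statement to which Gauss--Bonnet applies. Working in the Brill coordinates \eqref{3}, write $\lambda=|\eta|$ for the orbit-length density, so that $l/2\pi\le\lambda\le L/2\pi$ on $\Omega$, and record the fibration formula $d\omega_g=\lambda\,d\hat A\,d\phi$ relating the volume element of $M$ to the area element $d\hat A$ of the quotient metric $\hat g$ of \eqref{4}; consequently $\int_\Omega f\,d\omega_g=2\pi\int_{\Pi(\Omega)}f\lambda\,d\hat A$ for axisymmetric $f$. Since $l>0$, the region $\Pi(\Omega)$ is a compact surface with smooth boundary lying in the interior of the half-plane $\Sigma$, and $\partial\Pi(\Omega)$ is exactly the projection of $\partial\Omega$; this is what makes $\chi(\Pi(\Omega))$ available through the Gauss--Bonnet theorem.

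Next I would rewrite $16\pi\mu$ from the Hamiltonian constraint using the Kaluza--Klein reduction of the scalar curvature, $R=2K-2\lambda^{-1}\hat\Delta\lambda-\tfrac12\lambda^{2}|F|^{2}$, where $K$ is the Gaussian curvature of $\hat g$ and $F=dA$ is the twist (one checks the coefficients on the flat and warped models). The nonnegative twist term may be discarded for an upper bound, while $\lambda^{-1}\hat\Delta\lambda$, weighted by $\lambda$, integrates to a pure boundary flux $\int_{\partial\Pi(\Omega)}\partial_{\hat\nu}\lambda\,d\hat s$. For the rotational content, contracting the momentum constraint with $\eta$ and using $\mathfrak{L}_\eta k=0$ yields the flux identity $8\pi\lambda^{2}J(e_3)=\operatorname{div}_{\hat g}(\lambda^{2}\vec\kappa)$, where $\vec\kappa=k(e_1,e_3)\hat e_1+k(e_2,e_3)\hat e_2$ is the frame-dragging part of $k$ on $\Sigma$; the $e_3$-component of the flux drops out by axisymmetry. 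Since $|J|\ge|J(e_3)|$ by \eqref{10}, the task reduces to showing that the algebraic combination $(Tr_g k)^2-|k|^2$ in the Hamiltonian constraint, after subtracting $16\pi|J(e_3)|$, is controlled by a divergence: the factor $-2|\vec\kappa|^2$ coming from $-|k|^2$ I would balance against the flux $\operatorname{div}_{\hat g}(\lambda^2\vec\kappa)$ by a Young/Cauchy--Schwarz step, absorbing the residual diagonal products of $k$ into $|J|$ through the transverse part of the momentum constraint.

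I expect the single hardest step to be the control of the trace term $(Tr_g k)^2$, precisely because the maximal hypothesis $Tr_g k=0$ is \emph{not} assumed: this positive term obstructs a naive upper bound on $\mu$, and the whole gain of the theorem lies in reorganizing it, together with the off-diagonal components of $k$ and the angular-momentum flux, into a perfect square that can be discarded, plus a boundary contribution. Granting this, integrating over $\Omega$ turns the estimate into $16\pi\int_\Omega(\mu-|J|)\,d\omega_g\le 4\pi\int_{\Pi(\Omega)}K\lambda\,d\hat A + (\text{boundary})$, and the boundary flux assembles, via the tube formula $H=\kappa_{\hat g}+\hat\nu(\log\lambda)$ for the mean curvature together with the analogous expression for $Tr_{\partial\Omega}k$, into a multiple of $\int_{\partial\Omega}(H-|Tr_{\partial\Omega}k|)\,d\sigma_g$; the strongly untrapped hypothesis guarantees this quantity has the sign needed to transfer it to the left side of \eqref{12}. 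Finally I would estimate the weighted curvature integral by comparing $\lambda$ with its extremes $l/2\pi$ and $L/2\pi$ and invoking $\int_{\Pi(\Omega)}K\,d\hat A+\int_{\partial\Pi(\Omega)}\kappa_{\hat g}\,d\hat s=2\pi\chi(\Pi(\Omega))$; it is in this comparison of maximal and minimal orbit lengths that the factor $L^2/4l$ emerges, and the absence of apparent horizons is what lets the negative part of $K$ be handled so that only the topological term remains on the right.

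For part $ii)$ no separate argument is required: were $\Omega$ void of apparent horizons, part $i)$ would force inequality \eqref{12}, contradicting the assumed reverse strict inequality; hence $\Omega$ must contain an apparent horizon.
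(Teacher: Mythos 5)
There is a genuine gap, and it sits exactly where you predicted the hardest step would be. Your plan is to prove \eqref{12} directly from the constraint equations \eqref{0} on $(M,g,k)$, reorganizing $(Tr_{g}k)^{2}-|k|^{2}$, the off-diagonal components of $k$, and the angular-momentum flux into ``a perfect square that can be discarded, plus a boundary contribution,'' via a Young/Cauchy--Schwarz balancing against the momentum constraint. No such pointwise or integral reorganization is known to exist for the original data, and this is precisely the obstruction that forces the paper to deform the metric. The paper solves the generalized Jang equation \eqref{30} with warping factor $u=|\eta|^{-1}$ and Dirichlet data $f=0$ on $\partial\Omega$, and applies the stability inequality of Proposition \ref{prop1} not to $g$ but to the graph metric $\overline{g}=g+u^{2}df^{2}$: the scalar curvature identity \eqref{31} for $\overline{g}$ is what produces the perfect squares $|h-k|_{\overline{g}}^{2}+2|q|_{\overline{g}}^{2}$ (which are discarded) together with the divergence term $-2u^{-1}\operatorname{div}_{\overline{g}}(uq)$ (which becomes the boundary contribution). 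In other words, the square you are hoping to complete only exists after passing to the Jang graph; this is the Schoen--Yau mechanism, and it is the central idea of the proof, not a technicality. Likewise, the boundary quantity $H-|Tr_{\partial\Omega}k|$ does not arise from the tube formula $\kappa=H-\hat{\nu}(\log|\eta|)$ for $g$ (which produces $H$ alone, as in Section \ref{sec5}); it comes from Yau's boundary computation \eqref{36} relating $\overline{H}-q(\overline{\nu})$ on the Jang graph to the null expansions of $\partial\Omega$ in the original data.

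A sharp diagnostic that your route cannot close as outlined: nowhere does it make essential use of the hypothesis that $\Omega$ is void of apparent horizons. Gauss--Bonnet, the Kaluza--Klein reduction of $R$, and the constraint equations all hold whether or not horizons are present, yet inequality \eqref{12} is simply false in general without that hypothesis --- that is exactly the content of part $(ii)$. In the paper this hypothesis enters in one precise place: together with the strongly untrapped boundary condition it guarantees (via the existence theory of Han--Khuri and Yau) that the Dirichlet problem for \eqref{30} admits a regular axisymmetric solution, which is what activates the whole argument. Your closing remark that the absence of horizons ``lets the negative part of $K$ be handled'' has no actual mathematical mechanism behind it. Your part $(ii)$, the contrapositive of part $(i)$, is correct and is the same argument as the paper's, but it of course inherits the gap in part $(i)$.
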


It is interesting to note that the mild hypotheses of this theorem imply a strong topological restriction on $\Omega$. Namely from \eqref{12} and the strictly untrapped nature of the boundary, the Euler characteristic of the projection must be positive $\chi(\Pi(\Omega))>0$. Hence by the classification theorem for surfaces \cite{Munkres}, we have that $\Pi(\Omega)$ is homeomorphic to a disk or rather $\Omega$ is homeomorphic to a solid torus.

The next two results replace the role of total energy-momentum in Theorem \ref{thm1} by total angular momentum and total charge. They show a bound for the amount of angular momentum and charge that can be possessed by a body, in terms of its size. Alternatively, if quantum effects are taken into account, these inequalities reveal a minimum possible size for a spinning and/or charged particle \cite{Khuri1,Khuri2} (see also \cite{ArnowittDeserMisner}). Furthermore, a black hole existence criterion is given based on concentration of these two quantities.

\begin{theorem}\label{thm2}
Let $(M,g,k)$ be an axially symmetric, simply connected initial data set.
Let $\Omega\subset M$ be an axisymmetric body situated away from the axis $(l>0)$, satisfying the enhanced dominant energy condition \eqref{9}, and with strongly untrapped boundary.\smallskip

$i)$ If $\Omega$ is void of apparent horizons then
\begin{equation}\label{13.1}
|\mathcal{J}(\Omega)|+\frac{L}{16\pi^2}\int_{\partial\Omega}
(H-|Tr_{\partial\Omega}k|)d\sigma_{g}\leq\frac{L^2}{8\pi}\chi(\Pi(\Omega)).
\end{equation}

$ii)$ If the opposite (strict) inequality in \eqref{13.1} holds then $\Omega$ must contain an apparent horizon.
\end{theorem}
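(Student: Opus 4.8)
The plan is to mirror the proof of Theorem~\ref{thm1}, with the total energy--momentum replaced by the total angular momentum, and to obtain part $(ii)$ as the contrapositive of part $(i)$: were the strict reversed inequality in \eqref{13.1} to hold in a body containing no apparent horizon, part $(i)$ would force the non-strict inequality \eqref{13.1} and produce a contradiction, so $\Omega$ must then contain an apparent horizon. Hence it suffices to establish part $(i)$, and I assume throughout that $\Omega$ is void of apparent horizons.

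First I would pass to the orbit space $\Sigma=M/U(1)$ and run an $|\eta|$-weighted Gauss--Bonnet argument on the compact surface-with-boundary $\Pi(\Omega)$, exactly as for Theorem~\ref{thm1}. Writing $\lambda=|\eta|$, the spatial volume form of \eqref{3} factors as $d\omega_g=\lambda\,dA_{\hat g}\wedge d\phi$, so that integrals of axisymmetric quantities over $\Omega$ reduce to $2\pi$ times weighted integrals over $\Pi(\Omega)$. Substituting the Kaluza--Klein expression for the quotient Gaussian curvature in terms of the scalar curvature $R$, the Killing norm $\lambda$, and the twist converts the topological quantity $2\pi\chi(\Pi(\Omega))$ into the integral of $R$ together with manifestly nonnegative gradient and twist contributions and a boundary geodesic-curvature term. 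The Hamiltonian constraint in \eqref{0} then replaces $R$ by $16\pi\mu$ modulo the indefinite terms $|k|^{2}-(Tr_{g}k)^{2}$.

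The decisive new step, where angular momentum enters, concerns those indefinite terms $|k|^{2}-(Tr_{g}k)^{2}$. Their dangerous part is controlled through the momentum constraint in \eqref{0}: the in-plane momentum density $\vec{J}$ is absorbed via a Jang-type deformation of the data in the quotient directions, which --- precisely because $\Omega$ contains no apparent horizon and $\partial\Omega$ is strongly untrapped --- exists and remains regular, yielding an effective scalar curvature bounded below by a multiple of $\mu-|\vec{J}|$. The Killing component of the momentum, $J(e_{3})$, remains as the angular-momentum density, since $J_{i}\eta^{i}=\lambda\,J(e_{3})$ integrates to $\mathcal{J}(\Omega)$ by \eqref{6}. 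The enhanced dominant energy condition \eqref{9}, in the sharp form $\mu-|\vec{J}|\geq|J(e_{3})|$, is then exactly what dominates the angular-momentum density $\lambda|J(e_{3})|$ by the matter term $\lambda(\mu-|\vec{J}|)$, and hence controls $|\mathcal{J}(\Omega)|$. This is stronger than the standard condition \eqref{10}, which only supplies $\mu\geq(|\vec{J}|^{2}+J(e_{3})^{2})^{1/2}$ and cannot bound $|J(e_{3})|$ once $\vec{J}\neq 0$; this is why \eqref{9} rather than \eqref{10} must be assumed.

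Carrying the two powers of $\lambda$ intrinsic to the angular-momentum density through the weighted Gauss--Bonnet estimate, and then using $\lambda\leq L/2\pi$, collapses everything into \eqref{13.1}. The extra power of $|\eta|$ carried by $\mathcal{J}(\Omega)$ relative to the energy term of Theorem~\ref{thm1} offsets the $|\eta|^{-1}$ weighting present there, removing the factor $l^{-1}$ of \eqref{12} in favor of a pure constant in \eqref{13.1}, while the boundary geodesic-curvature contribution reproduces $\tfrac{L}{16\pi^{2}}\int_{\partial\Omega}(H-|Tr_{\partial\Omega}k|)\,d\sigma_{g}$ by way of the strongly untrapped condition. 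The main obstacle is the non-maximal regime: without $Tr_{g}k=0$ the terms $|k|^{2}-(Tr_{g}k)^{2}$ are indefinite and the bare Gauss--Bonnet estimate fails, so the argument hinges on constructing the Jang-type deformation and proving its regularity from the no-horizon and untrapped-boundary hypotheses, and then tracking the $\lambda$-weightings with enough care to recover the sharp constants of \eqref{13.1}.
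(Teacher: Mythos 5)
Your proposal follows essentially the same route as the paper: a generalized Jang-type deformation (regular precisely because $\Omega$ is horizon-free with strongly untrapped boundary) combined with the weighted Gauss--Bonnet/stability inequality on the quotient, the enhanced condition \eqref{9} used in the form $\mu-|\vec{J}|\geq|J(e_{3})|$ to dominate the angular-momentum density once the in-plane momentum is absorbed by the Jang term, and part $(ii)$ deduced by contradiction from part $(i)$. Your bookkeeping of the $|\eta|$-weights (the extra factor of $|\eta|$ in $J(\eta)$ eliminating the $l^{-1}$ of \eqref{12}) likewise matches the paper's computation, so the plan is correct and essentially identical to the published argument.
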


\begin{theorem}\label{thm3}
Let $(M,g,k)$ be an axially symmetric, simply connected initial data set.
Let $\Omega\subset M$ be an axisymmetric body situated away from the axis $(l>0)$, satisfying the strict dominant energy condition $\mu>|J|$ in $\Omega$ and the charged dominant energy condition $\mu_{EM}\geq|J_{EM}|$ on $\partial\Omega$, and with strongly untrapped boundary. \smallskip

$i)$ If $\Omega$ is void of apparent horizons then
\begin{equation}\label{14.1}
Q^2+\frac{\mathcal{C}_{0}L|\partial\Omega|^2}{16\pi^{2} l|\Omega|}\int_{\partial\Omega}
(H-|Tr_{\partial\Omega}k|)d\sigma_{g}
\leq\frac{\mathcal{C}_{0}L^2|\partial\Omega|^2}{8\pi l|\Omega|}\chi(\Pi(\Omega)),
\end{equation}
where
\begin{equation}
\mathcal{C}_{0}=\frac{\max_{\partial\Omega}(\mu-|J_{EM}|)}
{\min_{\Omega}(\mu-|J|)}.
\end{equation}

$ii)$ If the opposite (strict) inequality in \eqref{14.1} holds then $\Omega$ must contain an apparent horizon.
\end{theorem}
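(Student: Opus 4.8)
The plan is to deduce Theorem~\ref{thm3} directly from Theorem~\ref{thm1}, converting the matter energy-momentum bound \eqref{12} into the charge bound \eqref{14.1} by a single multiplicative comparison. Note first that all hypotheses of Theorem~\ref{thm1} are in force: the strict interior condition $\mu>|J|$ implies the dominant energy condition $\mu\geq|J|$, and the remaining hypotheses (axisymmetry, simple connectivity, $l>0$, strongly untrapped boundary) coincide. Moreover $\mu>|J|$ on the compact set $\Omega$ guarantees $\min_{\Omega}(\mu-|J|)>0$, so that $\mathcal{C}_{0}$ is finite and nonnegative; this is precisely what permits division by that quantity below. Concretely, I would aim to establish the estimate
\begin{equation*}
Q^{2}\leq\frac{\mathcal{C}_{0}|\partial\Omega|^{2}}{2\pi|\Omega|}\int_{\Omega}(\mu-|J|)\,d\omega_{g},
\end{equation*}
after which the theorem follows by multiplying \eqref{12} by the positive factor $m:=\mathcal{C}_{0}|\partial\Omega|^{2}/(2\pi|\Omega|)$ and observing that the resulting coefficients on the boundary term and on the right-hand side agree exactly with those in \eqref{14.1}.

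First I would estimate the charge. Since $Q^{2}$ is assembled from $\int_{\Omega}\operatorname{div}E$ and $\int_{\Omega}\operatorname{div}B$, the divergence theorem rewrites each as a flux of $E$, respectively $B$, through $\partial\Omega$, and the Cauchy--Schwarz inequality on $\partial\Omega$ (using $|n|=1$) then yields
\begin{equation*}
16\pi^{2}Q^{2}\leq|\partial\Omega|\int_{\partial\Omega}\left(|E|^{2}+|B|^{2}\right)d\sigma_{g}.
\end{equation*}
The next step invokes the charged dominant energy condition on the boundary: from $\mu_{EM}\geq|J_{EM}|$ and the definition \eqref{1} of $\mu_{EM}$ one obtains $\tfrac{1}{8\pi}(|E|^{2}+|B|^{2})\leq\mu-|J_{EM}|$ pointwise on $\partial\Omega$, hence $|E|^{2}+|B|^{2}\leq 8\pi\max_{\partial\Omega}(\mu-|J_{EM}|)$. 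Substituting this into the previous display produces $Q^{2}\leq\frac{|\partial\Omega|^{2}}{2\pi}\max_{\partial\Omega}(\mu-|J_{EM}|)$.

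It then remains to recognize this as the estimate sought. Combining the trivial lower bound $\int_{\Omega}(\mu-|J|)\,d\omega_{g}\geq|\Omega|\min_{\Omega}(\mu-|J|)$ with the identity $\mathcal{C}_{0}\min_{\Omega}(\mu-|J|)=\max_{\partial\Omega}(\mu-|J_{EM}|)$ gives
\begin{equation*}
\frac{\mathcal{C}_{0}|\partial\Omega|^{2}}{2\pi|\Omega|}\int_{\Omega}(\mu-|J|)\,d\omega_{g}\geq\frac{|\partial\Omega|^{2}}{2\pi}\max_{\partial\Omega}(\mu-|J_{EM}|)\geq Q^{2},
\end{equation*}
which is exactly the displayed inequality. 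Multiplying \eqref{12} through by $m$, replacing the term $m\int_{\Omega}(\mu-|J|)$ by the smaller quantity $Q^{2}$, and simplifying constants then yields \eqref{14.1}, proving part $i)$. Part $ii)$ is the logical contrapositive: a strict reversal of \eqref{14.1} contradicts the conclusion of part $i)$, so the hypothesis ``void of apparent horizons'' must fail and $\Omega$ contains an apparent horizon.

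The genuinely new input, and the step I expect to require the most care, is the charge estimate itself: one must express the total charge purely as a boundary flux, control it by the electromagnetic energy density there via the charged dominant energy condition, and then match this against the interior energy-momentum through the precisely chosen constant $\mathcal{C}_{0}$ and the size ratio $|\partial\Omega|^{2}/|\Omega|$. The alignment of all the numerical constants is delicate, and the \emph{strictness} of the interior dominant energy condition is essential both to make $\mathcal{C}_{0}$ well defined and to guarantee that the integral $\int_{\Omega}(\mu-|J|)$ dominating $Q^{2}$ is strictly positive.
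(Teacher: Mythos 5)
Your proposal is correct and takes essentially the same approach as the paper: the divergence theorem plus Cauchy--Schwarz to bound $16\pi^{2}Q^{2}$ by $|\partial\Omega|\int_{\partial\Omega}(|E|^{2}+|B|^{2})\,d\sigma_{g}$, the charged dominant energy condition to replace $|E|^{2}+|B|^{2}$ by $8\pi(\mu-|J_{EM}|)$ on the boundary, the $\mathcal{C}_{0}$ max/min comparison to pass to $\int_{\Omega}(\mu-|J|)\,d\omega_{g}$, and finally multiplication of \eqref{12} by $\mathcal{C}_{0}|\partial\Omega|^{2}/(2\pi|\Omega|)$, with part $ii)$ obtained as the contrapositive of part $i)$. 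The only cosmetic difference is that you bound the boundary integrand by its maximum before integrating, whereas the paper in \eqref{41} keeps $\int_{\partial\Omega}(\mu-|J_{EM}|)\,d\sigma_{g}$ one step longer before invoking $\mathcal{C}_{0}$.
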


Although the inequalities \eqref{12}, \eqref{13.1}, and \eqref{14.1} remain valid in the time-symmetric and maximal cases, the content of the corresponding apparent horizon exist statements becomes vacuous in this setting. This is due primarily to the method of proof, which utilizes a version of the Jang equation \cite{SchoenYau1}. In particular, the apparent horizons are produced by forcing solutions of the Jang equation, with constant Dirichlet boundary conditions, to blow-up. However, in the time-symmetric and maximal cases these solutions are always smooth. Therefore part $(ii)$ of the above theorems is only relevant when sufficient amounts of extrinsic curvature are involved. This situation is analogous to the range of applicability of the black hole existence results in \cite{Eardley,SchoenYau2,Yau} which also rely on the Jang equation. For this reason we provide refined versions of Theorems \ref{thm1}, \ref{thm2}, and \ref{thm3} in Section \ref{sec4} for the time-symmetric and maximal cases, which do not rely on this method of proof and give meaningful statements in this setting.

\section{Stability of the Quotient}
\label{sec3} \setcounter{equation}{0}
\setcounter{section}{3}

The proofs of the main results are based on a type of stability property of the quotient manifold $\Sigma$ along with properties of the generalized Jang equation \cite{BrayKhuri1,BrayKhuri2}. In this section we discuss the stability property, which was first observed by Reiris in \cite{Reiris}. Our contribution here is to refine Reiris' observation, give an independent proof, and to elucidate the meaning from a purely initial data perspective (as opposed to a spacetime point of view). The goal of this section is to establish the following result.

\begin{prop}\label{prop1}
Let $(M,g)$ be an axisymmetric, simply connected, 3-dimensional Riemannian manifold, and let $\Omega\subset M$ be a compact axisymmetric body. If $R$ denotes scalar curvature and $H$ denotes the mean curvature of $\partial\Omega$ with respect to the unit outward normal then
\begin{equation}\label{15}
\int_{\Pi(\Omega)}\frac{1}{2}R dA\leq 2\pi\chi(\Pi(\Omega))
-\int_{\partial\Pi(\Omega)}H ds.
\end{equation}
where $dA$ and $ds$ are elements of area and arc length with respect to the quotient metric $\hat{g}$.
\end{prop}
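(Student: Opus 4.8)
The plan is to reduce the three-dimensional statement to a two-dimensional one on the quotient surface $\Sigma$ and then invoke the Gauss--Bonnet theorem. Set $\lambda=|\eta|$, which descends to a positive function on $\Pi(\Omega)$ once $\Omega$ stays away from the axis. Writing the metric in the form $g=\hat g+\lambda^{2}(d\phi+A)^{2}$ as in \eqref{3} (so that $\lambda^{2}=\rho^{2}e^{-2U}$ and $A=A_{\rho}d\rho+A_{z}dz$), the standard $U(1)$ (Kaluza--Klein) reduction formula for the scalar curvature reads
\begin{equation*}
R=\hat R-\frac{2}{\lambda}\hat\Delta\lambda-\frac{\lambda^{2}}{4}|F|^{2},
\end{equation*}
where $\hat R=2K$ is the scalar curvature of $\hat g$ ($K$ its Gauss curvature), $\hat\Delta$ is the quotient Laplacian, and $F=dA$. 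I would either cite this or verify it by a short orthonormal-frame computation. Since the last term is nonpositive, halving and integrating against the quotient area element $dA$ gives
\begin{equation*}
\int_{\Pi(\Omega)}\tfrac{1}{2}R\,dA\le\int_{\Pi(\Omega)}K\,dA-\int_{\Pi(\Omega)}\frac{1}{\lambda}\hat\Delta\lambda\,dA .
\end{equation*}

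Next I would dispose of the two remaining terms. For the Gauss curvature term, Gauss--Bonnet applied to the surface $\Pi(\Omega)$ with metric $\hat g$ yields
\begin{equation*}
\int_{\Pi(\Omega)}K\,dA=2\pi\chi(\Pi(\Omega))-\int_{\partial\Pi(\Omega)}\kappa\,ds,
\end{equation*}
with $\kappa$ the geodesic curvature of $\partial\Pi(\Omega)$ with respect to the outward unit normal $\hat\nu$. For the Laplacian term I would use the identity $\lambda^{-1}\hat\Delta\lambda=\hat\Delta\log\lambda+|\hat\nabla\log\lambda|^{2}$ together with the divergence theorem to obtain
\begin{equation*}
\int_{\Pi(\Omega)}\frac{1}{\lambda}\hat\Delta\lambda\,dA=\int_{\partial\Pi(\Omega)}\frac{1}{\lambda}\partial_{\hat\nu}\lambda\,ds+\int_{\Pi(\Omega)}|\hat\nabla\log\lambda|^{2}\,dA .
\end{equation*}

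The final ingredient is to identify the two boundary integrands with the mean curvature $H$ of $\partial\Omega$ in $M$. Because $\partial\Omega$ is axisymmetric, its outward unit normal $\nu$ is orthogonal to $\eta$ and projects to $\hat\nu$. Taking the orthonormal frame $\{E_{1},E_{2}=\lambda^{-1}\eta\}$ tangent to $\partial\Omega$, the Killing identity $\nabla_{\eta}\eta=-\tfrac{1}{2}\nabla|\eta|^{2}$ gives $\nabla_{E_{2}}E_{2}=-\lambda^{-1}\nabla\lambda$, while the connection relation \eqref{5} identifies the $E_{1}$ contribution with the geodesic curvature of $\partial\Pi(\Omega)$. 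Tracing the second fundamental form then produces the surface-of-revolution decomposition
\begin{equation*}
H=\kappa+\frac{1}{\lambda}\partial_{\hat\nu}\lambda .
\end{equation*}
Substituting this and the two displayed identities into the scalar curvature inequality, the geodesic-curvature and normal-derivative boundary integrals combine precisely into $-\int_{\partial\Pi(\Omega)}H\,ds$, and the leftover bulk term $-\int_{\Pi(\Omega)}|\hat\nabla\log\lambda|^{2}\,dA\le0$ is discarded, yielding \eqref{15}.

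The main obstacle I anticipate is bookkeeping rather than conceptual: one must fix a single sign convention for the second fundamental form, the geodesic curvature, and Gauss--Bonnet, and carry it consistently through both the reduction formula and the mean curvature computation, so that the two boundary integrals reassemble into $H$ with the correct sign. It is exactly this matching that forces the discarded interior term to be the nonpositive quantity $-\int|\hat\nabla\log\lambda|^{2}\,dA$. A secondary technical point is the behavior near the rotation axis, where $\lambda\to0$ and $\log\lambda$ degenerates; this is why restricting to bodies away from the axis (equivalently $l>0$) is natural, and allowing $\Omega$ to meet the axis would require supplementing the identities above with a limiting argument verifying that the axis contributes no boundary term.
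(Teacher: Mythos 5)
Your proposal is correct and follows essentially the same route as the paper: your Kaluza--Klein reduction formula is precisely the paper's Lemma \ref{lemma1} (equation \eqref{23}, there proved by a direct Brill-coordinate computation rather than cited), and your decomposition $H=\kappa+\lambda^{-1}\partial_{\hat{\nu}}\lambda$ is the paper's equation \eqref{21}, derived from the same Killing-field identity. The remaining differences are purely organizational---the paper applies Gauss--Bonnet first and converts the boundary term $\int_{\partial\Pi(\Omega)}\hat{\nu}(\log|\eta|)\,ds$ into the bulk integral of $\hat{\Delta}\log|\eta|$, while you integrate the reduction formula and pass from bulk to boundary---and both arguments end by discarding the same two nonnegative terms, $|\hat{\nabla}\log|\eta||^{2}$ and the field-strength square.
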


Inequality \eqref{15} is referred to as a stability type inequality in analogy with the stability inequality for minimal hypersurfaces, noting that the right-hand side arises from the Gaussian curvature. Let $K$ denote the Gaussian curvature of the quotient $\Sigma$, then according to the Gauss-Bonnet formula
\begin{equation}\label{16}
\int_{\Pi(\Omega)}KdA=2\pi\chi(\Pi(\Omega))-\int_{\partial\Pi(\Omega)}\kappa ds,
\end{equation}
where $\kappa$ is geodesic curvature. Recall that if
$\hat{\gamma}\in\Sigma$ is the curve (parametrized by arc length) representing
$\partial\Pi(\Omega)$, then its geodesic curvature is given by
\begin{equation}\label{17}
\kappa=\hat{g}(\hat{\gamma}',\hat{\nabla}_{\hat{\gamma}'}\hat{\nu}),
\end{equation}
where $\hat{\nu}$ is the unit outer normal to $\partial\Pi(\Omega)$.
Let $\gamma\in\partial\Omega$
(parametrized by arc length) be the curve such that $\Pi(\gamma)=\hat{\gamma}$ and $\gamma'\perp\eta$, then
$d\Pi(\gamma')=\hat{\gamma}'$ so the notation is consistent; also let $\nu$ be the unit normal to $\partial\Omega$ with $d\Pi(\nu)=\hat{\nu}$. It follows that $(\gamma',|\eta|^{-1}\eta,\nu)$ forms an orthonormal frame on $\partial\Omega$, and we may evaluate the mean curvature by
\begin{equation}\label{18}
H=g(\gamma',\nabla_{\gamma'}\nu)
+g\left(\frac{\eta}{|\eta|},\nabla_{\frac{\eta}{|\eta|}}\nu\right).
\end{equation}
Next observe that since $\eta$ is a Killing field
\begin{equation}\label{19}
g(\eta,\nabla_{\eta}\nu) =-g(\nabla_{\eta}\eta,\nu)
=g(\nabla_{\nu}\eta,\eta) =\frac{1}{2}\nu(|\eta|^{2}),
\end{equation}
and thus
\begin{equation}\label{20}
H=g(\gamma',\nabla_{\gamma'}\nu)
+\nu(\log|\eta|)
\end{equation}
which implies that
\begin{equation}\label{21}
\kappa
=g(\gamma',\nabla_{\gamma'}\nu)
=H-\nu(\log|\eta|)=H-\hat{\nu}(\log|\eta|).
\end{equation}
Therefore \eqref{16} becomes
\begin{equation}\label{22}
\int_{\Pi(\Omega)}KdA
=2\pi\chi(\Pi(\Omega))-\int_{\partial\Pi(\Omega)}Hds
+\int_{\Pi(\Omega)}\hat{\Delta}\log|\eta|dA.
\end{equation}
It remains to compute the last integral in \eqref{22}. For this we give an independent proof of Proposition 2.1 in \cite{Reiris}, from the initial data perspective.

\begin{lemma}\label{lemma1}
Under the hypotheses of Proposition \ref{prop1}
\begin{equation}\label{23}
\hat{\Delta}|\eta|=|\eta|\left(K-\frac{1}{2}R\right)
-\frac{1}{4}|\eta|^{3}e^{4(U-\alpha)}(\partial_{z}A_{\rho} -\partial_{\rho}A_{z})^2,
\end{equation}
where $\alpha$, $U$, $A_{\rho}$, and $A_{z}$ are coefficients in the Brill form \eqref{3} of the metric $g$.
\end{lemma}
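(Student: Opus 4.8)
The plan is to recognize the Brill metric \eqref{3} as a Kaluza--Klein (equivalently, Riemannian submersion) metric over the quotient and to read off the identity from the associated dimensional reduction of the scalar curvature. Writing $V:=|\eta|=\rho e^{-U}$, the metric takes the form $g=\hat{g}+V^{2}\theta\otimes\theta$, where $\hat{g}=e^{2(\alpha-U)}(d\rho^{2}+dz^{2})$ is the quotient metric and $\theta=d\phi+A_{\rho}d\rho+A_{z}dz$ is the connection one--form for the $U(1)$ fibration $\Pi:M\to\Sigma$. Its curvature is the horizontal two--form $F=d(A_{\rho}d\rho+A_{z}dz)=(\partial_{\rho}A_{z}-\partial_{z}A_{\rho})\,d\rho\wedge dz$, which measures the failure of the horizontal distribution $\eta^{\perp}$ to be integrable and which will ultimately produce the last term in \eqref{23}.

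First I would establish the reduction formula
\begin{equation}
R=\hat{R}-\frac{2}{V}\hat{\Delta}V-\frac{1}{4}V^{2}F_{ij}F^{ij},
\end{equation}
relating the scalar curvature of $(M,g)$ to that of the base $(\Sigma,\hat{g})$, the fiber norm $V$, and the field strength $F$. The twist--free part ($F=0$) follows from a short computation: beyond those of $\hat{g}$, the only nonvanishing Christoffel symbols are $\Gamma^{i}_{\phi\phi}=-V\hat{\nabla}^{i}V$ and $\Gamma^{\phi}_{i\phi}=V^{-1}\partial_{i}V$, whence $R_{ij}=\hat{R}_{ij}-V^{-1}\hat{\nabla}_{i}\hat{\nabla}_{j}V$ and $R_{\phi\phi}=-V\hat{\Delta}V$, and tracing gives $R=\hat{R}-2V^{-1}\hat{\Delta}V$. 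Turning on the connection $\theta$ introduces the horizontal O'Neill tensor, whose only invariant in two base dimensions is $F$; carrying the extra contributions through Cartan's structure equations reproduces precisely the Maxwell term $-\tfrac14 V^{2}F_{ij}F^{ij}$, with coefficient and sign fixed by the standard Kaluza--Klein reduction.

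It then remains to rewrite the pieces in the coordinates of \eqref{3}. Since $(\Sigma,\hat{g})$ is two--dimensional, $\hat{R}=2K$, and since $\hat{g}$ is conformal to the flat metric with factor $e^{2(\alpha-U)}$ one computes $F_{ij}F^{ij}=2\,\hat{g}^{\rho\rho}\hat{g}^{zz}F_{\rho z}^{2}=2e^{4(U-\alpha)}(\partial_{z}A_{\rho}-\partial_{\rho}A_{z})^{2}$. Substituting these into the reduction formula and solving for $\hat{\Delta}V$ yields
\begin{equation}
\hat{\Delta}|\eta|=|\eta|\!\left(K-\tfrac12 R\right)-\tfrac14|\eta|^{3}e^{4(U-\alpha)}(\partial_{z}A_{\rho}-\partial_{\rho}A_{z})^{2},
\end{equation}
which is \eqref{23}. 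As a consistency check one can instead proceed directly: the conformal Laplacian gives $\hat{\Delta}|\eta|=e^{-2(\alpha-U)}(\partial_{\rho}^{2}+\partial_{z}^{2})(\rho e^{-U})$ and $K=-e^{-2(\alpha-U)}(\partial_{\rho}^{2}+\partial_{z}^{2})(\alpha-U)$, so that once $R$ is known in Brill coordinates the identity can be verified term by term.

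I expect the main obstacle to be the careful derivation of the field--strength term in the reduction formula, that is, tracking the off--diagonal curvature contributions generated by $\theta$ and confirming the coefficient $\tfrac14$ together with its sign; the twist--free reduction is routine and the conformal rewriting of $K$ and $F$ is elementary, so it is the bookkeeping of the integrability defect $F$---equivalently, of the cross terms $g(\eta,\cdot)$ that couple fibers to the base when $A_{\rho},A_{z}\neq 0$---where normalization errors are easiest to make. An alternative closer to the ``initial data perspective'' is to begin from the Killing identity $\tfrac12\Delta_{g}|\eta|^{2}=|\nabla\eta|^{2}-\mathrm{Ric}(\eta,\eta)$ and to push it down to $\Sigma$ via the relation between $\Delta_{g}$ and $\hat{\Delta}$ on $U(1)$--invariant functions; this route avoids explicit structure equations but still requires isolating the same twist contribution of $F$.
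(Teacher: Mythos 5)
Your proposal is correct, and it reaches \eqref{23} by a genuinely different route than the paper. The paper never invokes the submersion/Kaluza--Klein structure explicitly: it takes as input Chru\'{s}ciel's coordinate expression \eqref{26} for the scalar curvature in Brill coordinates, combines it with the two--dimensional conformal relations \eqref{24}--\eqref{25} to eliminate $\Delta_{\rho,z}\alpha$, and then computes $\hat{\Delta}|\eta|=e^{2U-2\alpha}\Delta_{\rho,z}(\rho e^{-U})$ by hand and substitutes --- a purely coordinate manipulation, in which the twist terms $A_{\rho},A_{z}$ are carried along inside the cited formula for $R$. You instead organize everything around the invariant dimensional--reduction identity
\begin{equation*}
R=\hat{R}-\frac{2}{V}\hat{\Delta}V-\frac{1}{4}V^{2}F_{ij}F^{ij},\qquad V=|\eta|,
\end{equation*}
after which the lemma is a one--line rearrangement using $\hat{R}=2K$ and $F_{ij}F^{ij}=2e^{4(U-\alpha)}(\partial_{z}A_{\rho}-\partial_{\rho}A_{z})^{2}$; I checked that this algebra reproduces \eqref{23} exactly, and that your reduction formula is consistent with \eqref{26} (indeed \eqref{26} is precisely its Brill--coordinate incarnation). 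What your route buys: the geometric content is transparent --- the last term in \eqref{23} is identified as the norm of the curvature of the $U(1)$ connection, i.e.\ the integrability defect of $\eta^{\perp}$ --- and the argument does not actually use the isothermal (conformally flat) structure of Brill coordinates except to evaluate $|F|^{2}$ at the very end, so it is in that sense more general. What it costs: the entire burden sits in the coefficient $-\tfrac{1}{4}$ of the Maxwell term, which you correctly flag as the delicate point and defer to the standard Kaluza--Klein computation; that is legitimate (the formula is classical), but note that you are then leaning on an external ingredient of comparable weight to the paper's citation of Chru\'{s}ciel's (3.7). Your ``consistency check'' via direct computation of $\hat{\Delta}(\rho e^{-U})$ and $K$ in coordinates is, in essence, the paper's own proof, and your closing alternative via the Bochner identity for Killing fields is also viable and closest in spirit to Reiris' original spacetime argument that the paper set out to replace.
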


\begin{proof}
Recall that for two 2-dimensional Riemannian metrics conformally related by $g_{2}=e^{2v}g_{1}$, the relation between the corresponding Gaussian curvatures and Laplacians is
\begin{equation}\label{24}
\Delta_{g_{1}}v=K_{g_{1}}-e^{2v}K_{g_{2}}, \quad\quad\quad \Delta_{g_{2}}=e^{-2v}\Delta_{g_{1}}.
\end{equation}
Applying this to the present situation where $\hat{g}=e^{-2U+2\alpha}\delta$ produces \begin{equation}\label{25}
\Delta_{\rho,z}(\alpha-U)=-e^{-2U+2\alpha}K,
\end{equation}
where $\delta=d\rho^2+dz^2$ and $\Delta_{\rho,z}=\partial_{\rho}^2+\partial_{z}^2$.
Next observe that the scalar curvature in Brill coordinates \cite[(3.7)]{Chrusciel} takes a particularly simple form
\begin{equation}\label{26}
2e^{-2U+2\alpha}R=8\Delta U-4\Delta_{\rho,z}\alpha -4|\nabla
U|_{\delta}^2-\rho^2 e^{-2\alpha}(\partial_{z}A_{\rho}-\partial_{\rho}A_{z})^2,
\end{equation}
where $\Delta$ is the flat Laplacian on $\mathbb{R}^3$ given by
\begin{equation}
\Delta=\Delta_{\rho,z}+\frac{1}{\rho}\partial_{\rho}.
\end{equation}
Therefore subtracting 4
times \eqref{25} from \eqref{26}, so that $\Delta_{\rho,z}\alpha$ is eliminated, yields
\begin{equation}\label{27}
\Delta_{\rho,z}U=\frac{1}{2}e^{-2U+2\alpha}R-\frac{2}{\rho}\partial_{\rho}U +|\nabla
U|_{\delta}^2+\frac{\rho^2}{4}e^{-2\alpha}(\partial_{z}A_{\rho}-\partial_{\rho}A_{z})^2 -e^{-2U+2\alpha}K.
\end{equation}
Lastly, applying the second relation in \eqref{24} along with direct computation shows that
\begin{align}\label{28}
\begin{split}
\hat{\Delta}|\eta|=& e^{2U-2\alpha}\Delta_{\rho,z}(\rho e^{-U})\\
=&e^{2U-2\alpha}\left(2\partial_{\rho}e^{-U}
+\rho\partial_{\rho}^{2}e^{-U}+\rho\partial_{z}^{2}e^{-U}\right)\\
=&\rho
e^{U-2\alpha}\left(-\Delta_{\rho,z}U+|\nabla U|_{\delta}^2 -\frac{2}{\rho}\partial_{\rho}U\right).
\end{split}
\end{align}
Inserting \eqref{27} into \eqref{28} yields the desired result.
\end{proof}

In order to complete the proof of Proposition \ref{prop1}, use Lemma \ref{lemma1} in \eqref{22} to obtain
\begin{align}\label{29}
\begin{split}
\int_{\Pi(\Omega)}\frac{1}{2}RdA
=&2\pi\chi(\Pi(\Omega))-\int_{\partial\Pi(\Omega)}Hds\\
&-\int_{\Pi(\Omega)}\left(|\hat{\nabla}\log|\eta||^2+\frac{1}{4}|\eta|^2e^{4(U-\alpha)}
(\partial_{z}A_{\rho}-\partial_{\rho}A_{z})^2\right)dA.
\end{split}
\end{align}

\section{Proof of the Main Results}
\label{sec4} \setcounter{equation}{0}
\setcounter{section}{4}

In order to establish Theorems \ref{thm1}, \ref{thm2}, and \ref{thm3} we will utilize properties of the generalized Jang equation \cite{BrayKhuri1,BrayKhuri2}, which are described below. Motivated by the desire to impart nonnegativity to the scalar curvature of initial data satisfying the dominant energy condition, or to study the existence of apparent horizons, one may solve the following equation (for $f$)
\begin{equation}\label{30}
\left(g^{ij}-\frac{u^2 f^{i}f^{j}}{1+u^2|\nabla f|^{2}}\right)\left(\frac{u\nabla_{ij}f+u_{i}f_{j}+u_{j}f_{i}}
{\sqrt{1+u^2|\nabla f|^{2}}}-k_{ij}\right)=0,
\end{equation}
where $f^{i}=g^{ij}\nabla_{j}f$ and $u$ is a given positive function. Geometrically,
solutions should be viewed as a graph $t=f(x)$ in the warped product 4-manifold $(\mathbb{R}\times M, u^2 dt^2+g)$, where equation \eqref{30} is equivalent to the apparent horizon equation if $k$ is extended to the 4-manifold appropriately. When regular solutions do not exist, the graph $t=f(x)$ blows-up and approximates a cylinder over an apparent horizon in the base manifold $M$ (see \cite{BrayKhuri1,BrayKhuri2}). On the other hand, if a bounded domain $\Omega\subset M$ does not contain any apparent horizons and the boundary is strongly untrapped $H-|Tr_{\partial\Omega}k|>0$, then the Dirichlet problem for \eqref{30} with $f=0$ on $\partial\Omega$ admits a smooth solution \cite{HanKhuri,Yau}; the solution will be axisymmetric as long as $u$ is axisymmetric, that is $\eta(u)=0$ implies $\eta(f)=0$.
Furthermore, the scalar curvature of the induced graph metric $\overline{g}=g+u^2 df^2$ has the expression \cite{BrayKhuri1,BrayKhuri2}
\begin{equation}\label{31}
\overline{R}=16\pi(\mu-J(w))+
|h-k|_{\overline{g}}^{2}+2|q|_{\overline{g}}^{2}
-2u^{-1}\operatorname{div}_{\overline{g}}(uq),
\end{equation}
where $h$ is the second fundamental form of the graph in the dual Lorentzian setting $(\mathbb{R}\times M, -u^{2}dt^2+\overline{g})$,
and $w$ and $q$ are 1-forms given by
\begin{equation}\label{32}
h_{ij}=\frac{ u \nabla_{ij}f
+ u_i f_j +  u_j  f_i}{ \sqrt{1 + u^2 |\nabla f|^2 }},\text{ }\text{ }\text{ }\text{ }
w_{i}=\frac{u f_{i}}{\sqrt{1+u^{2}|\nabla f|^{2}}},\text{
}\text{ }\text{ }\text{ }
q_{i}=\frac{u f^{j}}{\sqrt{1+u^{2}|\nabla f|^{2}}}(h_{ij}-k_{ij}).
\end{equation}

Choose $u=|\eta|^{-1}$ and suppose that the generalized Jang equation has a regular solution over $\Omega$, which is the case when no apparent horizons are present. The first task is to estimate the matter-energy content of the body. According to \eqref{31}, $\mu-J(w)\geq\mu-|J|\geq 0$, and $\overline{g}\geq g$ we have that
\begin{align}\label{33}
\begin{split}
\int_{\Omega}(\mu-J(w))|\eta|d\omega_{g}
&\leq\int_{\Omega}(\mu-J(w))|\eta|d\omega_{\overline{g}}\\
&\leq\left(\frac{L}{2\pi}\right)^{2}\frac{1}{8\pi}
\int_{\Omega}|\eta|^{-1}\left(\frac{1}{2}\overline{R}
+u^{-1}div_{\overline{g}}(uq)\right)d\omega_{\overline{g}}\\
&=\left(\frac{L}{2\pi}\right)^{2}\frac{1}{8\pi}\left(
\int_{\Omega}\frac{1}{2}|\eta|^{-1}\overline{R}d\omega_{\overline{g}}
+\int_{\partial\Omega}|\eta|^{-1}q(\overline{\nu})d\sigma_{\overline{g}}\right),
\end{split}
\end{align}
where $\overline{\nu}$ is the unit outer normal to $\partial\Omega$ with respect to $\overline{g}$. Now note that $\eta$ is a Killing field for $\overline{g}$ with $|\eta|_{g}=|\eta|_{\overline{g}}$, and we may extend $\overline{g}$ to be axisymmetric on all of $M$. It then follows from the discussion
in Section \ref{sec2} that there is a global system of Brill coordinates $(\overline{\rho},\overline{z},\phi)$ for $\overline{g}$. In particular
\begin{equation}\label{34}
d\omega_{\overline{g}}=|\eta|d\phi\wedge d\overline{A},\text{
}\text{ }\text{ }\text{ }\text{ } \text{ }\text{ }\text{ }d\sigma_{\overline{g}}=|\eta|d\phi\wedge
d\overline{s},
\end{equation}
and Proposition \ref{prop1} applies to yield
\begin{equation}\label{35}
\int_{\Omega}(\mu-J(w))|\eta|d\omega_{g}
\leq\frac{1}{4}\left(\frac{L}{2\pi}\right)^{2}
\left(2\pi\chi(\Pi(\Omega)) -\int_{\partial\Pi(\Omega)}(\overline{H}-q(\overline{\nu}))d\overline{s}\right).
\end{equation}
Next, note that a calculation in \cite[Page 764]{Yau} shows that
\begin{equation}\label{36}
\overline{H}-q(\overline{\nu})=\sqrt{1+u^{2}|\nabla
f|^{2}}(H+\xi Tr_{\partial\Omega}k)
-\xi \frac{Tr_{\partial\Omega}k}{u|\nabla f|+\sqrt{1+u^2|\nabla f|^2}},
\end{equation}
where $\xi(x)=\pm 1$ exactly when $\nu(f)(x)=\mp|\nabla f(x)|$; recall that $f|_{\partial\Omega}=0$ so that this defines $\xi$ at each point of the boundary. Since by assumption
$H>|Tr_{\partial\Omega}k|\geq 0$ it follows that
\begin{equation}\label{37}
\overline{H}-q(\overline{\nu})\geq H-|Tr_{\partial\Omega}k|,
\end{equation}
and hence with the help of $d\overline{s}\geq ds$ we find that
\begin{equation}\label{38}
\int_{\Omega}(\mu-J(w))|\eta|d\omega_{g}
\leq\frac{L^2}{8\pi}\chi(\Pi(\Omega)) -\frac{L}{16\pi^2}
\int_{\partial\Omega}(H-|Tr_{\partial\Omega}k|)d\sigma_{g}.
\end{equation}

We are now in a position to prove the results stated in Section \ref{sec2}. First note that
Theorem \ref{thm1} $(i)$ is a direct consequence of \eqref{38} together with
\begin{equation}\label{39}
\int_{\Omega}(\mu-J(w))|\eta|d\omega_{g}\geq\frac{l}{2\pi}\int_{\Omega}(\mu-|J|)d\omega_{g}.
\end{equation}
In order to establish Theorem \ref{thm2} $(i)$ observe that since the Jang solution $f$ is axisymmetic $w(\eta)=0$, and thus $|J(w)|\leq|\vec{J}|$. Hence, with the enhanced dominant energy condition $\mu\geq|\vec{J}|+|J(e_{3})|$ we have
\begin{align}\label{40}
\begin{split}
|\mathcal{J}(\Omega)|\leq\int_{\Omega}|J(\eta)|d\omega_{g}
&=\int_{\Omega}|J(e_{3})||\eta|d\omega_{g}\\
&=\int_{\Omega}\left[|J(e_{3})|+|\vec{J}|-\mu
+(\mu-|\vec{J}|)\right]|\eta|d\omega_{g}\\
&\leq\int_{\Omega}(\mu-|\vec{J}|)|\eta|d\omega_{g}\\
&\leq\int_{\Omega}(\mu-J(w))|\eta|d\omega_{g}.
\end{split}
\end{align}
By combining \eqref{38} and \eqref{40} the desired result follows.
Next observe that if the charged dominant energy condition $\mu_{EM}\geq|J_{EM}|$ holds on the boundary and $\mu>|J|$ is valid on the interior of $\Omega$ then
\begin{align}\label{41}
\begin{split}
Q^{2}
=&\left(\frac{1}{4\pi}\int_{\partial\Omega}E\cdot\nu d\sigma_{g}\right)^{2}
+\left(\frac{1}{4\pi}\int_{\partial\Omega}B\cdot\nu d\sigma_{g}\right)^{2}\\
\leq&\frac{|\partial\Omega|}{16\pi^{2}}
\int_{\partial\Omega}\left(|E|^{2}+|B|^{2}\right)d\sigma_{g}\\
=&\frac{|\partial\Omega|}{16\pi^{2}}\int_{\partial\Omega}\left[\left(|E|^{2}+|B|^{2}
-8\pi\mu+8\pi|J_{EM}|\right)+8\pi(\mu-|J_{EM}|)\right]d\sigma_{g}\\
\leq&\frac{|\partial\Omega|}{2\pi}\int_{\partial\Omega}(\mu-|J_{EM}|)d\sigma_{g}\\
\leq&\frac{|\partial\Omega|^2\mathcal{C}_{0}}{2\pi|\Omega|}
\int_{\Omega}(\mu-|J|)d\sigma_{g}.
\end{split}
\end{align}
This together with Theorem \ref{thm1} $(i)$ produces Theorem \ref{thm3} $(i)$.

Lastly, part $(ii)$ of all three theorems arises from the following argument. Assume by way of contradiction that $\Omega$ is void of apparent horizons. Then a regular solution to the generalized Jang equation exists as described at the beginning of this section. Therefore we may apply the above arguments to establish that part $(i)$ is valid, yielding a contradiction to the hypotheses of part $(ii)$.

\section{The Time-Symmetric and Maximal Cases}
\label{sec5} \setcounter{equation}{0}
\setcounter{section}{5}

As discussed at the end of Section \ref{sec2}, reliance on Jang type equations to establish Theorems \ref{thm1}, \ref{thm2}, and \ref{thm3} ensures that in the maximal ($Tr_{g}k=0$) and time-symmetric cases ($k=0$) part $(ii)$ of these theorems, concerning black hole existence, is not meaningful. Therefore, in this setting, we seek alternate criteria which guarantee the presence of a trapped surface. Consider the inequality of Proposition \ref{prop1}, rewritten from the perspective of $\Omega$:
\begin{equation}\label{42}
\int_{\Omega}R|\eta|^{-1}d\omega_{g}\leq 8\pi^2\chi(\Pi(\Omega))
-2\int_{\partial\Omega}H|\eta|^{-1}d\sigma_{g}.
\end{equation}
In the time-symmetric and maximal cases, the dominant energy condition implies nonnegative scalar curvature $R\geq 0$. This together with nonnegative mean curvature $H\geq 0$ yields a refined version of Theorem \ref{thm1} $(i)$. In addition, \eqref{42} leads to conditions which force $\partial\Omega$ to be trapped or partially trapped. In this regard we define a surface $S$ to be averaged trapped if
\begin{equation}\label{43}
\int_{S}H|\eta|^{-1}d\sigma_{g}< 0.
\end{equation}
Note that this definition in axisymmetry differs slightly from previous definitions of averaged trapped surfaces (see eg. \cite{MalecXie}) by the presence of $|\eta|^{-1}$ in the surface measure. We also set the oscillation of a function to be $\operatorname{osc}(f)=\max f-\min f$.

\begin{theorem}\label{thm4}
Let $(M,g)$ be an axially symmetric, simply connected initial data set.
Let $\Omega\subset M$ be an axisymmetric body situated away from the axis $(l>0)$, with nonnegative scalar curvature $R\geq 0$.\smallskip

$i)$ If $\partial\Omega$ has nonnegative mean curvature $H\geq 0$ then
\begin{equation}\label{44}
\int_{\Omega}Rd\omega_{g}+2\int_{\partial\Omega}Hd\sigma_{g}\leq 4\pi L\chi(\Pi(\Omega)).
\end{equation}

$ii)$ If
\begin{equation}\label{45}
\int_{\Omega}Rd\omega_{g}> 4\pi L\chi(\Pi(\Omega)),
\end{equation}
then $\partial\Omega$ is an averaged trapped surface.\medskip

$iii)$ If
\begin{equation}\label{46}
\int_{\Omega}Rd\omega_{g}> 4\pi L\chi(\Pi(\Omega))
+\frac{L}{\pi}|\partial\Omega|\operatorname{osc}(H|\eta|^{-1}),
\end{equation}
then $\partial\Omega$ is a trapped surface.
\end{theorem}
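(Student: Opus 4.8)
The plan is to deduce all three parts from the stability inequality of Proposition \ref{prop1} in the reformulated shape \eqref{42}. The single recurring device is to convert the $|\eta|^{-1}$-weighted integrals appearing there into unweighted ones using the pointwise bound $|\eta|^{-1}\geq 2\pi/L$, which holds because $L=2\pi\max_{\Omega}|\eta|$ and $l>0$ keeps $\Omega$ off the axis. For part $(i)$ I would combine this with the sign hypotheses $R\geq 0$ and $H\geq 0$ to estimate both integrals on the left of \eqref{42} from below,
\begin{equation*}
\int_\Omega R|\eta|^{-1}d\omega_g\geq\frac{2\pi}{L}\int_\Omega R\,d\omega_g,\qquad \int_{\partial\Omega}H|\eta|^{-1}d\sigma_g\geq\frac{2\pi}{L}\int_{\partial\Omega}H\,d\sigma_g,
\end{equation*}
and then substitute into \eqref{42} and multiply through by $L/2\pi$; this produces \eqref{44} at once.

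For part $(ii)$ I would leave the boundary term of \eqref{42} untouched and bound only the bulk term, using $R\geq 0$, by $\int_\Omega R|\eta|^{-1}d\omega_g\geq\frac{2\pi}{L}\int_\Omega R\,d\omega_g$. The hypothesis \eqref{45} makes this lower bound strictly larger than $8\pi^2\chi(\Pi(\Omega))$, so rearranging \eqref{42} forces $\int_{\partial\Omega}H|\eta|^{-1}d\sigma_g<0$, which is exactly the definition \eqref{43} of an averaged trapped surface. Note that here $H\geq 0$ is deliberately \emph{not} assumed, since its negativity is part of the conclusion.

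Part $(iii)$ is the crux and is where I expect the real work to lie: one must upgrade an averaged inequality on the boundary to a pointwise one. Writing $\varphi:=H|\eta|^{-1}$, the same manipulation as in $(ii)$ yields the upper bound $\int_{\partial\Omega}\varphi\,d\sigma_g\leq 4\pi^2\chi(\Pi(\Omega))-\frac{\pi}{L}\int_\Omega R\,d\omega_g$ on the boundary integral, and hence on the average $\overline{\varphi}=|\partial\Omega|^{-1}\int_{\partial\Omega}\varphi\,d\sigma_g$. The coefficient $L/\pi$ in \eqref{46} is calibrated precisely so that the hypothesis delivers $\overline{\varphi}<-\operatorname{osc}(\varphi)$. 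I would then invoke the elementary inequality $\max\varphi\leq\overline{\varphi}+\operatorname{osc}(\varphi)$, which follows from $\min\varphi\leq\overline{\varphi}$ together with $\operatorname{osc}(\varphi)=\max\varphi-\min\varphi$, to conclude $\max\varphi<0$.

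Since $|\eta|>0$ throughout $\Omega$, the pointwise negativity $\varphi<0$ is equivalent to $H<0$ everywhere on $\partial\Omega$, so $\partial\Omega$ is trapped. The main obstacle, and the only genuinely new ingredient beyond the direct corollaries of \eqref{42}, is this average-to-pointwise passage: the stability inequality controls only a single integral average of $H|\eta|^{-1}$, and it is exactly the oscillation term in \eqref{46} that rules out the existence of any boundary point at which the mean curvature fails to be strictly negative.
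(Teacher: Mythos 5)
Your proposal is correct and follows essentially the same route as the paper: all three parts are extracted from the rewritten stability inequality \eqref{42} using $|\eta|^{-1}\geq 2\pi L^{-1}$, and part $(iii)$ rests on the same oscillation bound $\max\varphi-\varphi\leq\operatorname{osc}(\varphi)$ that the paper uses. The only cosmetic difference is that you argue $(iii)$ directly (via $\max\varphi\leq\overline{\varphi}+\operatorname{osc}(\varphi)<0$) while the paper runs the equivalent estimate as a proof by contradiction starting from $\max(H|\eta|^{-1})\geq 0$.
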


\begin{proof}
Observe that $(ii)$ follow immediately from \eqref{42}, and $(i)$ follows similarly with the aid of $|\eta|^{-1}\geq 2\pi L^{-1}$. In order to establish $(iii)$, assume by way of contradiction that $\partial\Omega$ is not trapped. Then $\max(H|\eta|^{-1})\geq 0$, and hence \eqref{42} implies
\begin{equation}\label{47}
\int_{\Omega}Rd\omega_{g}\leq 4\pi L\chi(\Pi(\Omega))
+\frac{L}{\pi}\int_{\partial\Omega}\left[\max(H|\eta|^{-1})-H|\eta|^{-1}\right]d\sigma_{g},
\end{equation}
which yields the opposite inequality in \eqref{46}, a contradiction.
\end{proof}

Recall that in time-symmetry the quantity $\int_{\Omega}Rd\omega_{g}$ is referred to as the rest mass of $\Omega$. Therefore the criteria in $(ii)$ and $(iii)$ of the above theorem are consistent with the basic intuition that highly concentrated mass in a region of fixed size results in gravitational collapse. Analogous results may be obtained for concentration of angular momentum and charge in the maximal setting. To see this observe that slight modification of the inequalities \eqref{40}, under the dominant energy condition assumption, produces
\begin{equation}\label{48}
|\mathcal{J}(\Omega)|\leq\frac{L}{32\pi^2}\int_{\Omega}Rd\omega_{g}.
\end{equation}
Moreover, under the assumption of nonnegative non-electromagentic matter energy $\mu_{EM}\geq 0$ on $\partial\Omega$, similar arguments to those of \eqref{41} yield
\begin{equation}\label{49}
Q^2\leq\frac{|\partial\Omega|^2\mathcal{C}_{1}}{32\pi^2|\Omega|}
\int_{\Omega}Rd\omega_{g},
\end{equation}
where
\begin{equation}\label{49.1}
\mathcal{C}_{1}=\frac{\max_{\partial\Omega}R}{\min_{\Omega}R}.
\end{equation}
We then have the following analogues of Theorems \ref{thm2} and \ref{thm3}.

\begin{theorem}\label{thm5}
Let $(M,g,k)$ be an axially symmetric, simply connected, maximal initial data set.
Let $\Omega\subset M$ be an axisymmetric body situated away from the axis $(l>0)$, satisfying the dominant energy condition $\mu\geq|J|$.\smallskip

$i)$ If $\partial\Omega$ has nonnegative mean curvature $H\geq 0$ then
\begin{equation}\label{50}
|\mathcal{J}(\Omega)|+\frac{L}{16\pi^2}\int_{\partial\Omega}Hd\sigma_{g}
\leq \frac{L^2}{8\pi}\chi(\Pi(\Omega)).
\end{equation}

$ii)$ If
\begin{equation}\label{51}
|\mathcal{J}(\Omega)|> \frac{L^2}{8\pi}\chi(\Pi(\Omega)),
\end{equation}
then $\partial\Omega$ is an averaged trapped surface.\medskip

$iii)$ If
\begin{equation}\label{52}
|\mathcal{J}(\Omega)|> \frac{L^2}{8\pi}\chi(\Pi(\Omega))
+\frac{L^2}{32\pi^3}|\partial\Omega|\operatorname{osc}(H|\eta|^{-1}),
\end{equation}
then $\partial\Omega$ is a trapped surface.
\end{theorem}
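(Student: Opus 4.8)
The plan is to obtain all three parts of Theorem~\ref{thm5} by feeding the angular-momentum bound \eqref{48} into the corresponding parts of Theorem~\ref{thm4}; no geometric input beyond \eqref{48} and the maximal form of Proposition~\ref{prop1} is required. The structural observation making this work is that the constant $\tfrac{L}{32\pi^2}$ appearing in \eqref{48} is precisely the factor that converts the scalar-curvature thresholds of Theorem~\ref{thm4} into the angular-momentum thresholds stated here. Throughout, the maximal condition $Tr_{g}k=0$ together with the constraint $16\pi\mu=R-|k|^2$ and the dominant energy condition $\mu\geq|J|$ guarantees $R\geq 0$, so that Theorem~\ref{thm4} is available.

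For part $(i)$ I would chain \eqref{48} directly with Theorem~\ref{thm4}$(i)$. The latter gives $\int_{\Omega}Rd\omega_{g}\leq 4\pi L\chi(\Pi(\Omega))-2\int_{\partial\Omega}Hd\sigma_{g}$ under the hypotheses $R\geq 0$ and $H\geq 0$, both of which hold here. Multiplying through by $\tfrac{L}{32\pi^2}$ and invoking \eqref{48} yields
\begin{equation*}
|\mathcal{J}(\Omega)|\leq\frac{L}{32\pi^2}\int_{\Omega}Rd\omega_{g}
\leq\frac{L^2}{8\pi}\chi(\Pi(\Omega))-\frac{L}{16\pi^2}\int_{\partial\Omega}Hd\sigma_{g},
\end{equation*}
which is exactly \eqref{50} after rearrangement.

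For parts $(ii)$ and $(iii)$ I would run the implication in reverse, rescaling each angular-momentum hypothesis into a scalar-curvature hypothesis already treated in Theorem~\ref{thm4}. Combining the assumed strict inequality with \eqref{48} gives $\tfrac{L}{32\pi^2}\int_{\Omega}Rd\omega_{g}\geq|\mathcal{J}(\Omega)|$, so the hypothesis \eqref{51} of part $(ii)$ forces $\int_{\Omega}Rd\omega_{g}>4\pi L\chi(\Pi(\Omega))$, which is precisely \eqref{45}; Theorem~\ref{thm4}$(ii)$ then renders $\partial\Omega$ averaged trapped. Likewise the hypothesis \eqref{52} of part $(iii)$ forces $\int_{\Omega}Rd\omega_{g}>4\pi L\chi(\Pi(\Omega))+\tfrac{L}{\pi}|\partial\Omega|\operatorname{osc}(H|\eta|^{-1})$, which is \eqref{46}, and Theorem~\ref{thm4}$(iii)$ renders $\partial\Omega$ trapped.

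The proof is thus a dictionary between scalar curvature and angular momentum, and there is no serious obstacle once \eqref{48} is available. The only point requiring genuine care is the arithmetic of constants: one must check that $\tfrac{32\pi^2}{L}\cdot\tfrac{L^2}{8\pi}=4\pi L$ and $\tfrac{32\pi^2}{L}\cdot\tfrac{L^2}{32\pi^3}=\tfrac{L}{\pi}$, so that the rescaled thresholds match \eqref{45} and \eqref{46} on the nose. The substantive work is really upstream, in the derivation of \eqref{48}, where the maximal hypothesis enters through the inequality $\mu\leq R/16\pi$; this is the sole place the maximal condition is used, and it is what lets the angular momentum be controlled purely by rest-mass-type data.
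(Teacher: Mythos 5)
Your proposal is correct and follows essentially the same route as the paper: the paper likewise obtains part $(i)$ by combining \eqref{48} with Theorem \ref{thm4} $(i)$, and handles parts $(ii)$ and $(iii)$ by the same rescaling argument (the paper phrases it as repeating the argument of Theorem \ref{thm4}, while you invoke Theorem \ref{thm4} $(ii)$, $(iii)$ directly after converting hypotheses \eqref{51}, \eqref{52} into \eqref{45}, \eqref{46} via \eqref{48} --- an equivalent packaging). Your constant checks and the observation that maximality plus the constraint and dominant energy condition give $R\geq 0$ (so Theorem \ref{thm4} applies) correctly fill in the details the paper leaves implicit.
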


\begin{theorem}\label{thm6}
Let $(M,g,k)$ be an axially symmetric, simply connected, maximal initial data set.
Let $\Omega\subset M$ be an axisymmetric body situated away from the axis $(l>0)$, satisfying positive scalar curvature $R>0$ in $\Omega$ and nonnegative non-electromagnetic matter energy $\mu_{EM}\geq 0$ on $\partial\Omega$.\smallskip

$i)$ If $\partial\Omega$ has nonnegative mean curvature $H\geq 0$ then
\begin{equation}\label{53}
Q^2+\frac{\mathcal{C}_{1}|\partial\Omega|^2}{16\pi^2|\Omega|}\int_{\partial\Omega}Hd\sigma_{g}
\leq \frac{\mathcal{C}_{1}L|\partial\Omega|^2}{8\pi|\Omega|}\chi(\Pi(\Omega)).
\end{equation}

$ii)$ If
\begin{equation}\label{54}
Q^2>\frac{\mathcal{C}_{1}L|\partial\Omega|^2}{8\pi|\Omega|}\chi(\Pi(\Omega)),
\end{equation}
then $\partial\Omega$ is an averaged trapped surface.\medskip

$iii)$ If
\begin{equation}\label{55}
Q^2> \frac{\mathcal{C}_{1}L|\partial\Omega|^2}{8\pi|\Omega|}\chi(\Pi(\Omega))
+\frac{\mathcal{C}_{1}L|\partial\Omega|^3}{32\pi^3|\Omega|}\operatorname{osc}(H|\eta|^{-1}),
\end{equation}
then $\partial\Omega$ is a trapped surface.
\end{theorem}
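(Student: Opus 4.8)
The plan is to establish Theorem \ref{thm6} as the charged analogue of Theorem \ref{thm4}, with the bound \eqref{49} serving as the bridge that converts the rest-mass estimates already proved into estimates for $Q^2$. Since we work in the maximal case with $R>0$, the constant $\mathcal{C}_1=\max_{\partial\Omega}R/\min_{\Omega}R$ in \eqref{49.1} is well-defined and positive, and the key inequality
\[
Q^2\leq\frac{\mathcal{C}_1|\partial\Omega|^2}{32\pi^2|\Omega|}\int_{\Omega}R\,d\omega_g
\]
is in hand from \eqref{49}, using $\mu_{EM}\geq 0$ on $\partial\Omega$ together with $16\pi\mu=R-|k|^2\leq R$ in the maximal setting. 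Each part of the theorem then follows by feeding one of the three conclusions of Theorem \ref{thm4} through this inequality.

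For part $(i)$ I would insert Theorem \ref{thm4}$(i)$, namely \eqref{44} rewritten as $\int_\Omega R\,d\omega_g\leq 4\pi L\chi(\Pi(\Omega))-2\int_{\partial\Omega}Hd\sigma_g$, into \eqref{49}. Multiplying through by the positive factor $\mathcal{C}_1|\partial\Omega|^2/(32\pi^2|\Omega|)$ and collecting the $\chi$ term and the boundary term produces \eqref{53}; here the hypothesis $H\geq 0$ enters only to license the use of \eqref{44}.

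For parts $(ii)$ and $(iii)$ I would instead route \eqref{49} through the stability inequality \eqref{42} rather than through \eqref{44}. Because $R>0$ and $|\eta|^{-1}\geq 2\pi L^{-1}$, one has $\int_\Omega R\,d\omega_g\leq (L/2\pi)\int_\Omega R|\eta|^{-1}d\omega_g$, and combining this with \eqref{42} gives $\int_\Omega R\,d\omega_g\leq 4\pi L\chi(\Pi(\Omega))-\tfrac{L}{\pi}\int_{\partial\Omega}H|\eta|^{-1}d\sigma_g$. Inserting this into \eqref{49} yields
\[
Q^2\leq \frac{\mathcal{C}_1 L|\partial\Omega|^2}{8\pi|\Omega|}\chi(\Pi(\Omega))-\frac{\mathcal{C}_1 L|\partial\Omega|^2}{32\pi^3|\Omega|}\int_{\partial\Omega}H|\eta|^{-1}d\sigma_g.
\]
If the hypothesis \eqref{54} of part $(ii)$ holds, this display forces $\int_{\partial\Omega}H|\eta|^{-1}d\sigma_g<0$, which is precisely the definition \eqref{43} of an averaged trapped surface. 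For part $(iii)$ I would argue by contradiction exactly as in Theorem \ref{thm4}$(iii)$: assuming $\partial\Omega$ is not trapped gives $\max(H|\eta|^{-1})\geq 0$, so that $-\int_{\partial\Omega}H|\eta|^{-1}d\sigma_g\leq|\partial\Omega|\operatorname{osc}(H|\eta|^{-1})$, and substituting this into the previous display gives the reverse of \eqref{55}, contradicting the hypothesis.

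The computations are entirely routine once \eqref{49} and Theorem \ref{thm4} are available, so I do not anticipate a genuine analytic obstacle; the only points demanding care are the bookkeeping of the numerical constants and the three places where $R>0$ is essential: to make $\mathcal{C}_1$ well-defined, to preserve the direction of the inequality when replacing $|\eta|^{-1}$ by its lower bound $2\pi L^{-1}$, and to guarantee $\min_\Omega R>0$ in \eqref{49}. I would also remark that, as in Theorem \ref{thm4}, the terms ``trapped'' and ``averaged trapped'' in $(ii)$ and $(iii)$ refer to the sign of $H$ (respectively its weighted average), consistent with the definition \eqref{43}.
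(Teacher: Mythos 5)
Your proposal is correct and follows essentially the same route as the paper: the paper proves Theorem \ref{thm6} in one stroke by combining \eqref{49} with Theorem \ref{thm4} $(i)$ for part $(i)$, and by repeating the Theorem \ref{thm4} argument (via \eqref{42} and the oscillation/contradiction step) for parts $(ii)$ and $(iii)$, which is exactly what you do, just with the constants and the maximal-case justification of \eqref{49} written out explicitly.
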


\begin{proof}
Part $(i)$ of Theorems \ref{thm5} and \ref{thm6} follows directly from \eqref{48} and \eqref{49} combined with Theorem \ref{thm4} $(i)$. Parts $(ii)$ and $(iii)$ are then established in a similar way to that of Theorem \ref{thm4}.
\end{proof}

Finally, we speculate on the heuristic physical reasoning behind the formation of black holes due to concentration of angular momentum and charge. Intuitively, large amounts of angular momentum should induce an object to fly apart rather than collapse, and large amounts of charge should cause a body's constituent parts to repel, again suggesting that such a body
will come apart rather than implode. Thus it seems at first thought to be counter-intuitive to expect a body to collapse due to concentration of angular momentum or charge.
However, it should be taken into account that in order for a body to stay together with
such high degrees of angular momentum or charge, it should be extremely massive to create a sufficiently powerful gravitational field to counteract the forces (angular momentum/charge) pulling it apart. With this we obtain a reasonable heuristic explanation for the collapse. Namely, sufficiently large amounts of angular momentum or charge in a body of a fixed size implies correspondingly large amounts of matter/energy in the same region, which according to the Trapped Surface and Hoop Conjectures should induce gravitational collapse. Moreover, this explanation is also consistent with the mathematical proofs above.

\end{document}